\title{\LARGE \bf
Disease Spread over Randomly Switched Large-Scale Networks
}
\author{Masaki Ogura$^{1}$ and Victor M.~Preciado$^{2}$% <-this % stops a space
%\thanks{*This work was not supported by any organization}% <-this % stops a space
\thanks{$^{1}$M.~Ogura is with the Department of Mathematics and Statistics, Texas Tech University, TX 79409, USA. {\tt\small msk.ogura@gmail.com}, {\tt\small masaki.ogura@ttu.edu}}%
\thanks{$^{2}$V.M.~Preciado is with the Department of Electrical and Systems Engineering at the University of Pennsylvania, Philadelphia, PA 19104, USA. {\tt\small preciado@seas.upenn.edu}}%
}
\newtheorem{theorem}{Theorem}
\newtheorem{proposition}[theorem]{Proposition}
\newtheorem{lemma}[theorem]{Lemma}
\newtheorem{definition}[theorem]{Definition}
\newtheorem{assumption}[theorem]{Assumption}
\newtheorem{example}[theorem]{Example}
\newtheorem{remark}[theorem]{Remark}
\numberwithin{theorem}{section}
\newenvironment{proofof}[1]{% %Extended proof environment
\begin{proof}}{\end{proof}
}
\newcommand{\norm}[1]{\lVert #1 \rVert}
\newcommand{\abs}[1]{\lvert #1 \rvert}
\DeclareMathOperator{\diag}{diag}
\DeclareMathOperator{\Var}{Var}
\DeclareSymbolFont{bbold}{U}{bbold}{m}{n}
\DeclareSymbolFontAlphabet{\mathbbold}{bbold}
\newcommand{\onev}{\mathbbold{1}}
\newcommand\hyperrefopt{colorlinks=true,linkcolor={black},citecolor={black},urlcolor={black},pdfauthor={Masaki Ogura}}
\newcommand{\afterequation}{\vskip 0pt}
\begin{document}

\maketitle
\thispagestyle{empty}
\pagestyle{empty}

\newcommand{\nnminumoneot}{\cramped{2^{n{\scriptscriptstyle(}n-1{\scriptscriptstyle)/}2}}}

\begin{abstract}
In this paper we study disease spread over a randomly switched
network, which is modeled by a stochastic switched differential
equation based on the so called $N$\nobreakdash-intertwined model for
disease spread over static networks. Assuming that all the edges of
the network are independently switched, we present sufficient
conditions for the convergence of infection probability to zero.
Though the stability theory for switched linear systems can
naively derive a necessary and sufficient condition for the
convergence, the condition cannot be used for large-scale networks
because, for a network with $n$ agents, it requires computing the
maximum real eigenvalue of a matrix of size exponential in $n$. On the
other hand, our conditions that are based also on the spectral theory
of random matrices can be checked by computing the maximum real
eigenvalue of a matrix of size exactly $n$.
\end{abstract}

%\listoftodos  

\section[Introduction]{Introduction}

Network epidemiology, a branch of mathematical epidemiology which aims
to mathematically understand disease spread over networks, has been
attracting ever-growing attention~\cite{Pastor-Satorras2014a}. One of
the reasons comes from the finding that the heterogeneous structure of
real world networks, such as an uneven distribution of the number of
neighbors of individuals, cannot be ignored to perform an accurate
prediction of epidemics~\cite{Chakrabarti2008}. Another reason comes
from recent advancements of the technologies for collecting a massive
amount of data from real human interactions, which enable researchers
to build accurate mathematical models of human networks (see
\cite{Vespignani2009} and references therein). One of the recent
achievements of network epidemiology, for example, is the effective
prediction of the pandemic of H1N1 influenza on
2009~\cite{Balcan2009}.

In the last decade, the major emphasis of network epidemiology has
been on the disease spread over {\it static}
networks~\cite{Pastor-Satorras2014a}. This time-invariance assumption
enables us to model disease spread as a time-homogeneous Markov
process, for which an effective approximation by a
constant-coefficient differential equation, called an $N$-intertwined
model, is available~\cite{Chakrabarti2008,VanMieghem2009a}. For
example it is known~\cite{Chakrabarti2008,VanMieghem2009a} that the
evolution of a certain epidemic model over a static network is
characterized by only the maximum real part of the eigenvalues of the
adjacency matrix of the network.

However, recently, several simulation studies have reported that the
{\it dynamical} nature of human interactions can greatly affect the
way epidemics evolve and that static approximations of such dynamical
networks may result in inaccurate conclusions~\cite{Masuda2013}. For
example, Vazquez et al.~\cite{Vazquez2007} observe that the speed of
disease spread can be substantially slowed down on randomly switched
networks than on static ones. The result by Volz and
Meyers~\cite{Volz2009} suggests that the frequency of the change of
the topology of network structure is an important factor for the
emergence of endemic. In \cite{Sun2014}, it is numerically confirmed
that so-called memory effect of networks can slow down disease spread.
Despite those simulation results, rigorous understanding of disease
spread over dynamical networks based on mathematical analysis is very
limited.

This paper gives an analysis of disease spread over randomly switched
networks with the stability theory for switched linear systems called
Markov jump linear systems~\cite{Costa2013} and the spectral theory of
random graphs~(see, e.g., \cite{Chung2003a}). We
provide sufficient conditions for epidemics to eventually die out over
randomly switched networks having independent and Markovian edges.
This paper can be considered as a stochastic counter-part of the
recently submitted paper~\cite{Rami2014}, where the authors study
epidemics over deterministically switched networks.

One of the major difficulties of the problem is in the computational
cost. The stability theory for Markov jump linear systems in fact
gives a necessary and sufficient condition for epidemics eventually
dying out as we will see later
(Proposition~\ref{prop:meanstbl:Sigma}). {Unfortunately, checking
the condition requires finding the maximum real eigenvalue of a matrix
with size $\cramped{2^{n(n-1)/2}}$, where $n$ denotes the number of the agents
in a network. On the other hand, the size of the matrix appearing in
the proposed conditions is equal to $n$, which allows us to apply the
conditions for large-scale networks.}

This paper is organized as follows. After giving the notation used in
this paper, in Section~\ref{sec:SwitchModel} we give the model of
disease spread over randomly switched networks. Then
Section~\ref{sec:EdgeIndependent} gives sufficient conditions for
epidemics dying out over randomly switched networks having independent
edges. Examples are presented in Section~\ref{sec:degrees}. Finally in
Section~\ref{sec:weight} we extend the obtained results to the case
when networks are modeled by weighted graphs.

\subsection{Mathematical Preliminaries}\label{sec:preliminary}

We let $I_n$ denote the $n\times n$ identity matrix. The subscript~$n$
will be omitted when it is obvious from the context. A real vector $v$
is said to be {\it nonnegative}, written $v\geq 0$, if it has only
nonnegative entries. By $\onev_n$ and $\onev_{n, m}$ we denote the
$n$-vector and the $n\times m$ matrix whose entries are all one. A
square matrix is said to be {\it Metzler} if its off-diagonal entries
are nonnegative. We denote the Kronecker product of matrices~$A$ and
$B$ by $A\otimes B$.

The Euclidean norm of $x\in \mathbb{R}^n$ is denoted by $\norm{x}$.
Also we define the $1$-norm of $x$ by $\norm{x}_1 = \sum_{i=1}^n
\abs{x_i}$. The maximum real part of the eigenvalues of $A$ is denoted
by $\eta(A)$. We say that $A$ is Hurwitz stable if $\eta(A) < 0$. The
matrix measure~\cite{Desoer1972} of $A\in \mathbb{R}^{n\times n}$ is
defined by $\mu(A) = \lim_{h\to 0}{(\norm{I+Ah} - 1)}/{h}$, where
$\norm{\cdot}$ denotes the maximum singular value. When $A$ is
symmetric and thus has only real eigenvalues, its maximum eigenvalue
is denoted by~$\lambda_{\max}(A)$.

An undirected graph is a pair~$\mathcal G = (\mathcal V, \mathcal E)$,
where $\mathcal V$ is a finite and nonempty set and $\mathcal E$ is a
set consisting of distinct and unordered pairs~$(i, j)$ of $i, j\in
\mathcal V$. We call the elements of~$\mathcal V$ vertexes and those
of $\mathcal E$ edges. The adjacency matrix~$A_{\mathcal G}$ of
$\mathcal G$ is defined as the square $\{0, 1\}$\nobreakdash-matrix of
size~$\abs{\mathcal V}$ whose $(i,j)$ entry is one if and only if
\mbox{$(i,j) \in \mathcal E$}. Clearly the adjacency matrix of an
undirected graph is symmetric and has zero diagonals. We say that
$j\in \mathcal V$ is a neighbor of~$i\in \mathcal V$ (or $i$ and $j$
are adjacent) if $(i, j) \in \mathcal V$. The degree of $i$ is defined
as the number of the neighbors of~$i$. 
%Unless otherwise stated we let
%$\mathcal V = \{1, \dotsc, n\}$.
 
The expectation of a random variable is denoted by~$E[\cdot]$. All the
Markov processes appearing in this paper are assumed to have finite
state spaces. %Also we use the following asymptotic notations. Let $f$
%and $g$ be real functions defined on a subset of $\mathbb{R}$. We
%write $f = O(g)$ if there exist $C > 0$ and $x_0 \in \mathbb{R}$ such
%that $\abs{f(x)} \leq C\abs{g(x)}$ for every $x\geq x_0$. We also
%write $f = o(g)$ if, for every $\epsilon>0$, there exists $x_0$ such
%that $\abs{f(x)} \leq \epsilon \abs{g(x)}$ for every $x\geq x_0$.

\section{Epidemic Model over Randomly Switched Networks}\label{sec:SwitchModel}

In this section we give a model for the spread of disease over
randomly switched networks using switched differential equations. The
model is based on a model called $N$-intertwined model originally
proposed for epidemics over static networks~\cite{VanMieghem2009a}.
Then we will review some results on the stability of switched linear
systems that will be used in this paper.

\subsection{$N$-intertwined Model for Static Networks}

We first give an overview of the $N$-intertwined
model~\cite{VanMieghem2009a} for so-called
susceptible-infected-susceptible types of disease. Let a population be
modeled by an undirected graph~$\mathcal G$ over vertexes $\{1,
\dotsc, n\}$. We regard each vertex as an agent who can be infected
and also can transmit disease to its neighbors. In the model, at each
time~$t\geq 0$, each vertex can be in one of the two states: {\it
susceptible} or {\it infected}. We assume that, when a vertex $i$ is
infected, the transition to the susceptible state occurs following a
Poisson process with rate~$\delta_i$, called {\it curing rate}. On the
other hand, if $i$ is in the susceptible state and one if its
neighbors is in the infected state, then $i$ will make a transition to
the infected state by following a Poisson process with rate~$\beta_j$,
called {\it infection rate}. All the Poisson processes are assumed to
be independent. Throughout this paper we assume that population has
{\it homogeneous infection rate $\beta$ and curing rate~$\delta$},
i.e., we assume that $\beta_1 = \cdots = \beta_n = \beta$ and
$\delta_1 = \cdots = \delta_n =\delta$.

Let $p_i(t)$ denote the probability that the vertex $i$ is infected.
Define $p = [p_1 \ \cdots \ p_n]^\top$ and $P = \diag(p_1, \dotsc,
p_n)$. Then the $N$-intertwined model~\cite{VanMieghem2009a} reads
\begin{equation}\label{eq:N-inter}
\frac{dp}{dt} = (\beta A_{\mathcal G}-\delta I)p - \beta P A_{\mathcal G}p,
\end{equation}
where $p(0) = p_0 \in [0, 1]^n$. We notice that clearly $p(t)\geq 0$
for every $t\geq 0$.

\subsection{Switched $N$-intertwined Model}\label{sec:switchedNinter}

Based on the $N$-intertwined model~\eqref{eq:N-inter} we can readily
state our model of disease spread over randomly switched networks. We
assume that our randomly switched network is modeled as $\mathcal
G_\sigma := \{\mathcal G_{\sigma(t)}\}_{t\geq 0}$, where $\sigma =
\{\sigma(t)\}_{t\geq 0}$ is a {\it time-homogeneous Markov process}
taking its values in $\{1, \dotsc, N\}$ and $\mathcal G_1, \dotsc,
\mathcal G_N$ are undirected graphs over vertexes $\{1, \dotsc, n\}$.
Examples of such random switched graphs include the activity driven
networks~\cite{Perra2012}, graphs with edge swapping~\cite{Volz2009},
and temporal exponential random graphs~\cite{Hanneke2010}. Then we
model the disease spread over the randomly switched  network $\mathcal
G_\sigma$ by the {\it switched $N$-intertwined model}
\begin{equation}\label{eq:def:Sigma}
\Sigma
\colon
\frac{dp}{dt} 
= 
(\beta A_{\mathcal G_{\sigma(t)}}-\delta I)p 
- 
\beta PA_{\mathcal G_{\sigma(t)}}p,
\end{equation}
where $p(0) = p_0 \in [0, 1]^n$ and $\sigma(0) = \sigma_0 \in \{1,
\dotsc ,N\}$ are arbitrary constants.

The principal aim of this paper is to give conditions under which the
zero equilibrium of $\Sigma$ is stable, i.e., the infection
probability $p$ converges or stays close to the origin. We introduce
the following definitions.

\begin{definition}\label{defn:stability}
We say that $\Sigma$ is 
\begin{enumerate}
\item {\it mean stable} if there exist $C>0$ and $\epsilon>0$ such
that, for every $p_0$ and $\sigma_0$, it holds that $E[\norm{p(t)}]
\leq Ce^{-\epsilon t} \norm{p_0}$;

\item {\it almost surely stable} if, for every $p_0$ and $\sigma_0$,
it holds that $P\left( \lim_{t\to\infty}\norm{p(t)} = 0 \right) = 1$.
\end{enumerate}
\afterequation
\end{definition}

The following linearized model
\begin{equation}\label{eq:def:barSigma}
\bar \Sigma\colon\frac{d\bar p}{dt} 
= 
(\beta A_{\mathcal G_{\sigma(t)}} - \delta I)\bar p
\end{equation}
of $\Sigma$ plays as an important role as the one for static networks
does in \cite{Preciado2013,Preciado2014}, where the authors propose the optimal
vaccination strategy for the disease spread over static networks. We
let the initial conditions of $\bar \Sigma$ given by
constants~\mbox{$\bar p(0) = \bar p_0 \in [0, 1]^n$} and $\sigma(0) =
\bar{\sigma}_0 \in \{1, \dotsc, N\}$. The next lemma is easy to see
but fundamental.

\begin{lemma}\label{lem:pleqbarp}
If $p_0 = \bar p_0$ and $\sigma_0 = \bar \sigma_0$, then we have
$\norm{p(t)}_1 \leq \norm{\bar p(t)}_1$ for every $t\geq 0$ with
probability one.
\end{lemma}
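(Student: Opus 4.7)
The plan is to prove the inequality pathwise, i.e., for each fixed realization of the switching signal $\sigma$, so that the ``with probability one'' qualifier comes along for free. Given that $\sigma$ is a Markov process on the finite state space $\{1,\dotsc,N\}$, almost every sample path is piecewise constant with finitely many jumps on any bounded interval, so on each inter-jump interval both $\Sigma$ and $\bar\Sigma$ reduce to ordinary linear or bilinear ODEs with a constant adjacency matrix $A_{\mathcal G_{\sigma(t)}}$.

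First I would introduce the error $e := \bar p - p$ and differentiate to obtain
\begin{equation*}
\frac{de}{dt} = (\beta A_{\mathcal G_{\sigma(t)}} - \delta I)\,e + \beta P A_{\mathcal G_{\sigma(t)}}\,p,
\end{equation*}
with $e(0) = 0$ by hypothesis. The key observation is that on each inter-jump interval the coefficient matrix $\beta A_{\mathcal G_{\sigma(t)}} - \delta I$ is Metzler (the adjacency matrix is nonnegative and $-\delta I$ affects only the diagonal), while the forcing term $\beta P A_{\mathcal G_{\sigma(t)}} p$ is componentwise nonnegative whenever $p\geq 0$. Recall that the paper already observed $p(t)\geq 0$ for all $t\geq 0$, and the same Metzler argument applied to $\bar\Sigma$ with initial datum in $[0,1]^n$ shows $\bar p(t)\geq 0$.

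Next I would invoke the standard comparison principle for Metzler systems: if $\dot x = M(t)x + f(t)$ with $M(t)$ Metzler, $f(t)\geq 0$, and $x(0)\geq 0$, then $x(t)\geq 0$ for all $t$. Applied to the equation for $e$ on the first inter-jump interval, this yields $e(t)\geq 0$ there. At each jump time $\tau$ the state is continuous, so $e(\tau)\geq 0$ carries over as the new initial condition for the next interval, and induction on the (finite) jumps in $[0,T]$ gives $e(t)\geq 0$ throughout $[0,T]$. Since $T$ was arbitrary, $\bar p(t) \geq p(t) \geq 0$ componentwise for all $t\geq 0$.

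Finally, since both vectors are nonnegative, the $1$-norms are just sums of entries, and $\bar p(t) - p(t) \geq 0$ componentwise immediately gives $\|p(t)\|_1 \leq \|\bar p(t)\|_1$. The only delicate step is the componentwise comparison in the presence of switching, but because $\sigma$ has at most finitely many jumps on any bounded interval almost surely, the pathwise induction described above handles it without additional machinery.
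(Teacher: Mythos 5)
Your proof is correct and follows the same basic strategy as the paper's: introduce the error $e = \bar p - p$, show it is entrywise nonnegative, and then sum the entries (equivalently, left-multiply by $\onev_n^\top$). The difference lies in the key differential step. The paper's proof asserts $de/dt = \beta P A_{\mathcal G_{\sigma(t)}} p \geq 0$ and concludes $e \geq 0$ from monotonicity alone, but that identity drops the homogeneous part: the correct equation is $de/dt = (\beta A_{\mathcal G_{\sigma(t)}} - \delta I)e + \beta P A_{\mathcal G_{\sigma(t)}}p$, exactly as you write it. With that extra term present one cannot conclude nonnegativity of $e$ from a sign condition on the derivative; one needs, as you do, the comparison principle for inhomogeneous systems with Metzler coefficient matrix, nonnegative forcing, and nonnegative initial data, applied pathwise and propagated by induction over the (almost surely finitely many) jump times of $\sigma$ on each bounded interval. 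So your argument supplies the positivity/comparison step that the paper's one-line derivation implicitly relies on, and your explicit pathwise treatment of the switching makes the ``with probability one'' qualifier precise. The concluding step --- both $p(t)$ and $\bar p(t)$ are nonnegative, so the entrywise inequality $p(t)\leq\bar p(t)$ passes to the $1$-norms --- coincides with the paper's.
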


\begin{proof}
Let $e(t) = \bar p (t)- p(t)$. Then $e(0) =0$ and also
$\cramped{de/{dt} = \beta P A_{\mathcal G_{\sigma(t)}}p \geq 0}$
because all of $\beta$, $P$, $\cramped{A_{\mathcal G_{\sigma(t)}}\!}$,
and $p$ are nonnegative. Therefore $p(t) \leq \bar p(t)$ entry-wise
for every $t\geq 0$. Finally multiply $\onev_{n}^\top$ to the obtained
inequality from the left to obtain the desired inequality.
\end{proof}

Therefore, the study of the stability of $\Sigma$ can be reduced to
that of the switched linear system~$\bar \Sigma$, for which various
results from the theory of switched linear
systems~\cite{Costa2013} are available. We will review some of
the results in the next section.

\subsection{Stability of Markov Jump Linear Systems}

As in the previous section let $\sigma = \{\sigma(t)\}_{t\geq 0}$ be a
time-homogeneous Markov process taking its values in $\{1, \dotsc,
N\}$ and let $A_1, \dotsc, A_N \in \mathbb{R}^{n\times n}$. Consider
the switched linear system (called {\it Markov jump linear
system}~\cite{Costa2013})
\begin{equation}\label{eq:MJLS}
\frac{dx}{dt} = A_{\sigma(t)}x, 
\end{equation}
where $x(0) = x_0 \in \mathbb{R}^n$ and $\sigma(0) = \sigma_0$ are
arbitrary constants. The stability of Markov jump linear systems is
defined in the following standard way.

\begin{definition}\label{defn:cont:switch}
We say that \eqref{eq:MJLS} is
\begin{enumerate}
\item {\it mean stable} if there exist $C > 0$ and
$\epsilon>0$ such that $E[\norm{x(t)}] \leq Ce^{-\epsilon
t}\norm{x_0}$ for all $x_0$ and $\sigma_0$; 

\item {\it almost surely stable} if, for all $x_0$ and $\sigma_0$,
there holds $P\left(\lim_{t\to\infty} \norm{x(t)} = 0\right) = 1$.
\end{enumerate}
\afterequation
\end{definition}

%\begin{remark}
%The difference 
%\end{remark}

We say that the Markov jump linear system~\eqref{eq:MJLS} is {\it
positive}~\cite{Ogura2013f} if $x_0 \geq 0$ implies
$x(t)\geq 0$ for every $t\geq 0$ with probability one. For the
system~\eqref{eq:MJLS} to be positive, it is necessary and sufficient
that all the matrices~$A_1$, $\dotsc$, $A_N$ are Metzler. For example,
$\bar \Sigma$ defined in \eqref{eq:def:barSigma} is a positive Markov
jump linear system. The next stability condition is obtained
in~\cite{Ogura2013f}.

\begin{proposition}\label{Prop:oguraBolzern}
Assume that the Markov jump linear system~\eqref{eq:MJLS} is positive.
Let $\Pi\in\mathbb{R}^{N\times N}$ be the infinitesimal generator
of $\sigma$. Then \eqref{eq:MJLS} is mean stable if and only if the
matrix
\begin{equation}\label{eq:stblmat:markov}
\mathcal A = \Pi^\top \otimes I_n + \diag(A_1, \dotsc, A_N)
\in \mathbb{R}^{nN\times nN}
\end{equation}
is Hurwitz stable.
\end{proposition}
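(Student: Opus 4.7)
The plan is to reduce the mean stability of \eqref{eq:MJLS} to exponential decay of the partial expectations of $x(t)$, which will evolve according to a deterministic system governed by $\mathcal{A}$. To that end, I would introduce $q_i(t) := E[x(t)\, 1_{\{\sigma(t)=i\}}] \in \mathbb{R}^n$ for each $i = 1, \dotsc, N$ and stack these into $q(t) := [q_1(t)^\top \cdots q_N(t)^\top]^\top \in \mathbb{R}^{nN}$. A standard derivation, based on the infinitesimal description $\Pr(\sigma(t+h)=j \mid \sigma(t)=i) = \delta_{ij} + \pi_{ij}h + o(h)$ combined with $\dot x = A_{\sigma(t)} x$ (or, equivalently, Dynkin's formula applied to $(x,i)\mapsto x\,1_{\{i=k\}}$), would then yield $\dot q_i = A_i q_i + \sum_{j=1}^N \pi_{ji} q_j$, which in stacked form reads $\dot q = \mathcal{A} q$ with initial condition $q(0) = e_{\sigma_0}\otimes x_0$, where $e_{\sigma_0}\in\mathbb{R}^N$ denotes the $\sigma_0$-th canonical basis vector.

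Next I would exploit positivity. Since each $A_i$ is Metzler and $\Pi^\top$ has nonnegative off-diagonal entries, the matrix $\mathcal{A}$ is itself Metzler, so $e^{\mathcal{A}t}$ is entry-wise nonnegative. Moreover, whenever $x_0\geq 0$, positivity of \eqref{eq:MJLS} gives $x(t)\geq 0$ almost surely, hence each $q_i(t)\geq 0$ entry-wise, and therefore
\begin{equation}
E[\norm{x(t)}_1] = \onev_n^\top E[x(t)] = \onev_n^\top \sum_{i=1}^N q_i(t) = (\onev_N \otimes \onev_n)^\top q(t) = \norm{q(t)}_1.
\end{equation}
In other words, on nonnegative initial data $E[\norm{x(t)}_1]$ coincides with the $1$-norm of the deterministic trajectory $q(t) = e^{\mathcal{A}t}q(0)$.

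With this identification in hand, sufficiency is immediate: Hurwitz stability of $\mathcal{A}$ gives $\norm{q(t)}_1 \leq Ce^{-\epsilon t}\norm{q(0)}_1$ and hence $E[\norm{x(t)}_1] \leq C e^{-\epsilon t}\norm{x_0}_1$ for $x_0 \geq 0$; the general case follows by splitting $x_0$ into its positive and negative parts and invoking the equivalence of norms on $\mathbb{R}^n$. For necessity, mean stability forces $e^{\mathcal{A}t}(e_j\otimes v)\to 0$ exponentially in the $1$-norm for every $v\geq 0$ and every $j\in\{1,\dotsc,N\}$; linearity together with the entry-wise nonnegativity of $e^{\mathcal{A}t}$ then extend this to exponential decay of $e^{\mathcal{A}t}w$ for every $w\geq 0$ in $\mathbb{R}^{nN}$. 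Since $\mathcal{A}$ is Metzler, the Perron--Frobenius theorem supplies a nonnegative eigenvector associated with the spectral abscissa $\eta(\mathcal{A})$, and forced exponential decay along this eigenvector yields $\eta(\mathcal{A}) < 0$. The principal obstacle is the first step: rigorously justifying the ODE for $q$ requires careful interchange of expectation and differentiation along jumps of $\sigma$, after which every remaining step is a routine application of positivity and Perron--Frobenius theory.
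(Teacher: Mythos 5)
The paper does not actually prove this proposition; it is imported verbatim from \cite{Ogura2013f}, so there is no internal proof to compare against. Your outline --- the partial expectations $q_i(t)=E[x(t)\,1_{\{\sigma(t)=i\}}]$ satisfying $\dot q=\mathcal{A}q$ with $q(0)=e_{\sigma_0}\otimes x_0$, the identity $E[\norm{x(t)}_1]=\norm{q(t)}_1$ on nonnegative initial data, and the Perron--Frobenius eigenvector of the Metzler matrix $\mathcal{A}$ for the necessity direction --- is precisely the standard argument used in that reference and in the literature on positive Markov jump linear systems, and it is correct as stated, with the only point requiring care being the rigorous derivation of the ODE for $q$, which you have already flagged.
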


We also recall the next proposition, which gives a sufficient
condition for the almost sure stability of not necessarily positive
Markov jump linear systems.

\begin{proposition}[{\cite[Theorem~4.2]{Fang2002c}}]\label{prop:as.stbl}
Assume that $\sigma$ has the unique stationary distribution~$\pi$.
Then \eqref{eq:MJLS} is almost surely stable if $E[\mu(A_\pi)] < 0$.
\end{proposition}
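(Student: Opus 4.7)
The plan is to combine a deterministic sample-path bound derived from the matrix measure with the ergodic theorem for finite-state continuous-time Markov chains. The key idea is that $\mu(\cdot)$ gives a one-sided log-derivative of the norm, so that along any realization of $\sigma$ the Euclidean norm $\|x(t)\|$ is controlled by a time integral of $\mu(A_{\sigma(s)})$; ergodicity then lets us replace this time average by the stationary expectation $E[\mu(A_\pi)] = \sum_{i=1}^N \pi_i \mu(A_i)$.

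First I would establish the sample-path inequality
\[
\|x(t)\| \leq \|x_0\|\exp\!\left(\int_0^t \mu(A_{\sigma(s)})\,ds\right)
\]
for every $t\geq 0$, almost surely. On any interval on which $\sigma \equiv i$ is constant, a standard property of the matrix measure yields $D^+\|x(t)\| \leq \mu(A_i)\|x(t)\|$, where $D^+$ is the upper right Dini derivative; this follows from $\|x(t+h)\| = \|(I+hA_i)x(t)\| + o(h)$ combined with the definition of~$\mu$ in the preliminaries. Since $\sigma$ has only finitely many jumps on any bounded interval, one can concatenate these local estimates across the switching times to obtain the claimed global bound. Taking logarithms and dividing by $t$ gives
\[
\frac{1}{t}\log\|x(t)\| \leq \frac{1}{t}\int_0^t \mu(A_{\sigma(s)})\,ds + \frac{1}{t}\log\|x_0\|.
\]

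Next I would apply the ergodic theorem to the finite-state, time-homogeneous Markov process $\sigma$ with unique stationary distribution~$\pi$: almost surely,
\[
\lim_{t\to\infty}\frac{1}{t}\int_0^t \mu(A_{\sigma(s)})\,ds = \sum_{i=1}^N \pi_i\,\mu(A_i) = E[\mu(A_\pi)] < 0,
\]
where the inequality is the hypothesis. Combining with the previous display yields $\limsup_{t\to\infty}\frac{1}{t}\log\|x(t)\| < 0$ almost surely, whence $\|x(t)\|\to 0$ almost surely for every $x_0$ and $\sigma_0$, which is the definition of almost sure stability.

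The main obstacle is the first step: deriving the integrated matrix-measure bound on $\|x(t)\|$ rigorously through the jump times of $\sigma$. Although the Dini-derivative estimate on each constant-$\sigma$ interval is classical, one must verify that the bound survives concatenation on a set of full probability, using that almost every sample path of $\sigma$ has locally finitely many jumps and that the value of $\|x(\tau)\|$ is preserved across each jump (only the driving matrix changes). Once this pathwise Gronwall-type estimate is in hand, the rest of the argument is a direct appeal to ergodicity and requires no further stability machinery.
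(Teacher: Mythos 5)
Your argument is correct and is essentially the standard proof of this result: the paper states Proposition~\ref{prop:as.stbl} without proof, citing Theorem~4.2 of Fang and Loparo, whose argument is exactly your pathwise estimate $\|x(t)\|\le\|x_0\|\exp\bigl(\int_0^t\mu(A_{\sigma(s)})\,ds\bigr)$ obtained by concatenating the logarithmic-norm bound across the (almost surely locally finite) switching times, followed by the ergodic theorem for the finite-state chain. The only point worth making explicit is that the almost sure convergence of the time average to $\sum_{i=1}^N\pi_i\mu(A_i)$ must hold for \emph{every} initial state $\sigma_0$, which follows because a finite-state Markov process with a unique stationary distribution is absorbed into its unique recurrent class in almost surely finite time, so the transient prefix contributes nothing to the limit of $\frac{1}{t}\int_0^t\mu(A_{\sigma(s)})\,ds$.
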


\section{Disease Spread over Randomly Switched Networks with Independent Edges}
\label{sec:EdgeIndependent}

The aim of this section is to give an easy to apply sufficient
condition for the stability of the switched $N$-intertwined
model~$\Sigma$ under the assumption that the underlying randomly
switched network has independent edges.

\subsection{Computational Difficulty}\label{sec:compdiff}

Before presenting the main result we first observe the
computational difficultly of the stability analysis. Theoretically,
the next proposition completely solves the problem by giving a
necessary and sufficient condition for mean stability via the
eigenvalues of a matrix.

\begin{proposition}\label{prop:meanstbl:Sigma}
Let $\Pi \in \mathbb{R}^{N\times N}$ be the infinitesimal generator of
the Markov process~$\sigma$. Define
\begin{equation}\label{eq:stblmat:markov:epidemic}
\mathcal A_\beta 
= 
\Pi^\top \otimes I_n 
+ 
\beta \diag(A_{\mathcal G_1}, \dotsc, A_{\mathcal G_N}).
\end{equation}
Then, $\Sigma$ is mean stable if and only if $\eta(\mathcal
A_\beta) < \delta$.
\end{proposition}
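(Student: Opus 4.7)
My plan is to reduce the stability of the nonlinear model $\Sigma$ to that of the positive Markov jump linear system $\bar\Sigma$ and then invoke Proposition~\ref{Prop:oguraBolzern}. The key algebraic observation is that the stability matrix built from the $A_i = \beta A_{\mathcal{G}_i} - \delta I$ via \eqref{eq:stblmat:markov} is exactly $\mathcal{A}_\beta - \delta I_{nN}$, so $\bar\Sigma$ is mean stable if and only if $\eta(\mathcal{A}_\beta) < \delta$.

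For the sufficiency direction, I would proceed as follows. Assume $\eta(\mathcal{A}_\beta) < \delta$. Then $\mathcal{A}_\beta - \delta I_{nN}$ is Hurwitz stable and, since $\bar\Sigma$ is a positive Markov jump linear system (the matrices $\beta A_{\mathcal{G}_i} - \delta I$ are Metzler), Proposition~\ref{Prop:oguraBolzern} gives constants $C,\epsilon>0$ such that $E[\norm{\bar p(t)}_1]\leq C e^{-\epsilon t}\norm{\bar p_0}_1$. By Lemma~\ref{lem:pleqbarp} applied with $\bar p_0 = p_0$ and $\bar\sigma_0 = \sigma_0$, we get $\norm{p(t)}_1 \leq \norm{\bar p(t)}_1$ almost surely, and taking expectations together with the equivalence of the $1$- and $2$-norms in $\mathbb{R}^n$ yields mean stability of $\Sigma$.

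The necessity direction is the main obstacle, since Lemma~\ref{lem:pleqbarp} only bounds $p$ by $\bar p$ and not the other way around. My plan is a standard linearization-by-scaling argument. Fix arbitrary $\bar p_0\in[0,1]^n$ and $\sigma_0$, and for $\epsilon\in(0,1]$ let $p^\epsilon$ denote the solution of $\Sigma$ with initial condition $\epsilon \bar p_0$. Setting $q^\epsilon := p^\epsilon/\epsilon$, one computes that $q^\epsilon$ satisfies
\begin{equation}
\frac{dq^\epsilon}{dt} = (\beta A_{\mathcal{G}_{\sigma(t)}} - \delta I)q^\epsilon - \epsilon\,\beta\, Q^\epsilon A_{\mathcal{G}_{\sigma(t)}} q^\epsilon,\qquad q^\epsilon(0)=\bar p_0,
\end{equation}
where $Q^\epsilon=\diag(q^\epsilon)$. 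The same argument as in Lemma~\ref{lem:pleqbarp} shows $0\leq q^\epsilon(t)\leq \bar p(t)$ entrywise for every $t\geq 0$ with probability one, and a pathwise continuous-dependence argument (Grönwall on each switching interval) gives $q^\epsilon(t)\to \bar p(t)$ as $\epsilon\to 0$. Combined with the uniform domination $\norm{q^\epsilon(t)} \leq \norm{\bar p(t)}$, Fatou's (or dominated convergence) lets me pass $\epsilon\to 0$ in
\begin{equation}
E[\norm{q^\epsilon(t)}] = \epsilon^{-1} E[\norm{p^\epsilon(t)}] \leq C e^{-\tilde\epsilon t}\norm{\bar p_0},
\end{equation}
where the last inequality uses the assumed mean stability of $\Sigma$ applied to the initial condition $\epsilon\bar p_0$. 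This yields $E[\norm{\bar p(t)}]\leq Ce^{-\tilde\epsilon t}\norm{\bar p_0}$, so $\bar\Sigma$ is mean stable, and Proposition~\ref{Prop:oguraBolzern} gives $\eta(\mathcal{A}_\beta)<\delta$.

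The delicate step is the pathwise convergence $q^\epsilon\to\bar p$ uniformly on compact time intervals; this is where the quadratic perturbation $\epsilon\beta Q^\epsilon A_{\mathcal{G}_\sigma} q^\epsilon$ must be controlled. Since $\sigma$ is a.s.\ piecewise constant and $\|q^\epsilon\|$ is a.s.\ bounded by $\|\bar p\|$, this follows from a routine Grönwall estimate on each interval between jumps, so I do not expect a serious technical hurdle once the scaling setup is in place.
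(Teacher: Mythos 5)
Your proof is correct. The sufficiency half coincides with the paper's argument (positivity of $\bar \Sigma$ because the matrices $\beta A_{\mathcal G_i}-\delta I$ are Metzler, Proposition~\ref{Prop:oguraBolzern} applied to the matrix $\mathcal A_\beta - \delta I_{nN}$, then Lemma~\ref{lem:pleqbarp} and norm equivalence). The necessity half, however, takes a genuinely different route: the paper only sketches this direction, invoking the Lyapunov-function results of Khasminskii, Zhu, and Yin to produce Lyapunov functions for the nonlinear system $\Sigma$ and then arguing informally that these also certify mean stability of $\bar \Sigma$ near the origin because the quadratic term can be ignored there. Your scaling argument replaces this with a direct, self-contained estimate: rescaling the initial condition by $\epsilon$ and the state by $\epsilon^{-1}$ turns the quadratic term into an $O(\epsilon)$ perturbation of $\bar\Sigma$; the comparison $0 \leq q^\epsilon \leq \bar p$ (the same Metzler argument as in Lemma~\ref{lem:pleqbarp}) together with variation of constants gives the pathwise convergence $q^\epsilon(t)\to\bar p(t)$, and Fatou's lemma transfers the $\epsilon$-uniform decay bound $E[\norm{q^\epsilon(t)}]\leq Ce^{-\tilde\epsilon t}\norm{\bar p_0}$ to $\bar p$. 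What your approach buys is rigor and elementarity---it avoids the external Lyapunov theorem and the hand-waved ``ignore the second-order term'' step---at the cost of some routine bookkeeping in the limit $\epsilon\to 0$. Two small points are worth making explicit: (i) you need the constants in the mean-stability estimate for $\Sigma$ to be uniform over initial conditions $p_0\in[0,1]^n$ in order to divide by $\epsilon$, which is exactly what Definition~\ref{defn:stability} provides; and (ii) your argument establishes the exponential bound for $\bar\Sigma$ only for nonnegative initial conditions in $[0,1]^n$, so add one line observing that homogeneity and the decomposition $x_0=x_0^+-x_0^-$ extend it to all $x_0\in\mathbb{R}^n$ as required by Definition~\ref{defn:cont:switch}.
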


\begin{proof}
The Markov jump linear system~$\bar \Sigma$ is positive because the
matrix~$\beta A_{\mathcal G_i}-\delta I$ is Metzler for every $i$.
Therefore, using Proposition~\ref{Prop:oguraBolzern}, one can see that
$\eta(\mathcal A_\beta) < \delta$ implies that $\bar \Sigma$ is mean
stable. Thus, by Lemma~\ref{lem:pleqbarp}, we can show that $\Sigma$
is also mean stable.

On the other hand, assume that $\Sigma$ is mean stable. Here we
provide only a sketch of the proof. By Theorem~3.8 in
\cite{Khasminskii2007}, we can show that $\Sigma$ admits Lyapunov
functions. Then it turns out that these Lyapunov functions also apply
to $\bar \Sigma$ around a sufficiently small neighborhood of the
origin. This is because, in such a neighborhood, the second order
term~$\beta PA_{\mathcal G_\sigma}p$ of~$\Sigma$ can be ignored
compared with its linear term. Therefore we can conclude that $\bar
\Sigma$ is also mean stable. Hence, again by
Proposition~\ref{Prop:oguraBolzern}, we obtain $\eta(\mathcal A_\beta)
< \delta$.
\end{proof}

Though Proposition~\ref{prop:meanstbl:Sigma} gives a practical
characterization of stability when the size $n$ of a switched network
is effectively small, unfortunately, it cannot be easily used for real
world networks with a large $n$. The size of the matrix~$\mathcal
A_\beta$ equals $nN$, where $N$ is the number of all the possible
configurations of the switched network. This $N$ can be very large
when the network has many agents and can have various patterns of
topology. In the extreme case when all the $\binom{n}{2} = n(n-1)/2$
possible edges can be present or not independently, $N$ equals the
rapidly growing exponential~$2^{n(n-1)/2}$. By the same reason,
Proposition~\ref{prop:as.stbl} is often not easy to use in practice
when $n$ is large because it involves finding the matrix measure of a
potentially huge number of matrices. We also remark that, in such a
situation, it would be computationally too expensive to even estimate
a stability condition by simulation.

These difficulties motivate us to find easy to apply sufficient
conditions for stability when the size of a network is large.

\subsection{Stability Condition for Switched Edge-independent Networks} \label{sec:edge:main}

One of the classical models of large{-scale} and random but static
networks is Erd\H{o}s--R\'enyi graphs~\cite{ErdHos1959}, in which
edges are assigned uniformly and independently for each pair of
vertexes. Based on the fact that the distributions of the degrees of
Erd\H{o}s--R\'enyi graphs shows a large deviation from those of real
world networks, recently Chung~\cite{Chung2003a} proposed an improved
version of the graphs by removing the uniformity constraint. 

{Extending the
above models, in this paper we study the disease spread over
edge-independent random dynamical networks defined as follows.}

\begin{definition}\label{defn:independent.edges}
For distinct $i, j \in \{1, \dotsc, n\}$ we let \mbox{$A_{ij} =
\{A_{ij}(t)\}_{t\geq 0}$} denote the $(i,j)$-element of the
matrix-valued stochastic process~$A_{\mathcal G_\sigma}$. We say that
$\mathcal G_\sigma$ has {\it independent edges} if the processes
$\{A_{ij}\}_{i>j}$ are stochastically independent.
\end{definition}

Then our assumption on the randomly switched network~$\mathcal
G_\sigma$ can be stated as follows.

\begin{assumption}\label{assm:independent.edges}
$\mathcal G_\sigma$ has independent edges and $\sigma$ has a unique
stationary distribution.
\end{assumption}

\begin{remark}
The existence of a unique stationary distribution is not very
restrictive because a Markov process in general has a unique
stationary distribution under a mild assumption of irreducibility and
recurrence property.
\end{remark}

Let $\pi$ denote the stationary distribution of~$\sigma$. Then
$\mathcal G_{\pi}$ is the random stationary graph with the random
adjacency matrix $A_{\mathcal G_\pi}$. Define $\bar A = E[A_{\mathcal
G_\pi}]$ and
\begin{equation}\label{eq:def:Delta}
\Delta = \max_{1\leq i\leq n} \biggl(
\sum_{j=1}^n \bar A_{ij}(1 - \bar A_{ij})
\biggr). 
\end{equation}
{Without loss of generality we assume $\Delta > 0$, because
otherwise $\Delta = 0$ and therefore the graph process~$\mathcal
G_\sigma$ equals the static network having the $\{0, 1\}$-matrix $\bar
A$ as its adjacency matrix. In this case $\Sigma$ coincides with the
$N$-intertwined model~\eqref{eq:N-inter} for static networks and
therefore $\Sigma$ is stable if and only if \mbox{$\lambda_{\max}(\bar
A) < \delta /\beta$} as found in \cite{VanMieghem2009a}.}

The next theorem gives an easy-to-use alternative of
Proposition~\ref{prop:meanstbl:Sigma} and is the main result of this
paper.

\begin{theorem}\label{thm:independent.edges}
Define the function~$f$ on $[0, \infty)$ by
\begin{equation}\label{eq:def:f}
f(s) = s + 2n^2\exp\left(-\frac{3s^2}{2s+6\Delta}\right). 
\end{equation}
Then $\Sigma$ is almost surely stable if
\begin{equation}\label{eq:threshold:independent.edges}
\lambda_{\max}(\bar A) + \min_{s\geq 0}f(s) 
< 
{\delta}/{\beta}.
\end{equation}
\afterequation
\end{theorem}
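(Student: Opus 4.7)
The plan is to first reduce to the linearized Markov jump linear system $\bar\Sigma$ and then appeal to a concentration inequality. By Lemma~\ref{lem:pleqbarp}, almost sure convergence $\bar p(t)\to 0$ implies $p(t)\to 0$ almost surely, so it suffices to show that $\bar\Sigma$ is almost surely stable. Under Assumption~\ref{assm:independent.edges}, $\sigma$ admits a unique stationary distribution~$\pi$, so Proposition~\ref{prop:as.stbl} reduces the problem to verifying
\[
E[\mu(\beta A_{\mathcal G_\pi}-\delta I)]<0.
\]

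Next I would exploit the symmetry of every $A_{\mathcal G_i}$ (since each $\mathcal G_i$ is undirected). Because $\beta A_{\mathcal G_\pi}-\delta I$ is symmetric, its spectral-norm matrix measure equals its largest eigenvalue, so $\mu(\beta A_{\mathcal G_\pi}-\delta I)=\beta \lambda_{\max}(A_{\mathcal G_\pi})-\delta$. Writing $A_{\mathcal G_\pi}=\bar A+X$ with $X:=A_{\mathcal G_\pi}-\bar A$ a centered symmetric random matrix, Weyl's inequality together with $\lambda_{\max}(X)\leq \norm{X}$ yields
\[
E[\lambda_{\max}(A_{\mathcal G_\pi})]\leq \lambda_{\max}(\bar A)+E[\norm{X}].
\]
It then remains to bound $E[\norm{X}]$ by $\min_{s\geq 0} f(s)$, after which hypothesis~\eqref{eq:threshold:independent.edges} immediately yields $E[\mu(\beta A_{\mathcal G_\pi}-\delta I)]<0$, as required.

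The main obstacle is controlling $E[\norm{X}]$ in terms of $\Delta$. This is where the independent-edges hypothesis and the spectral theory of random matrices enter: at the stationary distribution, the upper-triangular entries $X_{ij}$ are independent centered random variables with $\abs{X_{ij}}\leq 1$ and row-wise variance sums $\sum_j \Var(X_{ij}) = \sum_j \bar A_{ij}(1-\bar A_{ij}) \leq \Delta$ by~\eqref{eq:def:Delta}. A matrix Bernstein-type tail estimate for symmetric random matrices with independent upper-triangular entries then delivers
\[
P(\norm{X}>s)\leq 2n\exp\!\left(-\frac{3s^2}{2s+6\Delta}\right),\qquad s\geq 0.
\]
Combining this with the deterministic bound $\norm{X}\leq n$ and the elementary truncation
\[
E[\norm{X}]\leq s\, P(\norm{X}\leq s)+n\, P(\norm{X}>s)\leq s+n\, P(\norm{X}>s)
\]
gives $E[\norm{X}]\leq f(s)$ for every $s\geq 0$, and minimizing over $s$ closes the proof. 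The delicate point is to identify (or derive) a concentration result that yields precisely the Bernstein exponent in $f$; the standard form $\exp(-s^2/(2\sigma^2+2Ks/3))$ with $\sigma^2=\Delta$ and $K=1$ rearranges to the required $\exp(-3s^2/(2s+6\Delta))$, so the heart of the argument is locating a symmetric-matrix Bernstein inequality whose independence structure matches the edge-independence in Definition~\ref{defn:independent.edges}.
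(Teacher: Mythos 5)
Your proposal is correct and follows essentially the same route as the paper: reduce to the linearized system $\bar\Sigma$ via Lemma~\ref{lem:pleqbarp}, invoke the almost-sure stability criterion $E[\mu(\beta A_{\mathcal G_\pi}-\delta I)]<0$ with $\mu=\lambda_{\max}$ by symmetry, and then combine a Chung--Radcliffe-type concentration bound with the trivial estimate $\lambda_{\max}\le n$ and a truncation at level $s$ to obtain $E[\lambda_{\max}(A_{\mathcal G_\pi})]\le\lambda_{\max}(\bar A)+f(s)$. The concentration inequality you flag as the ``delicate point to locate'' is exactly Proposition~\ref{prop:ChungRad} (quoted from Chung and Radcliffe), which the paper states directly for $\lambda_{\max}(A)-\lambda_{\max}(\bar A)$, so your detour through $\norm{X}$ and Weyl's inequality is unnecessary but harmless.
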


\begin{remark}
{We can understand the quantity $\min_{s\geq 0}f(s)$ as the
measure of uncertainty because the quantity increases with respect to
$\Delta$, which measures the variance of the stationary random graph
$\mathcal G_\pi$.}
\end{remark}

{We note that the size of the matrix~$\bar A$ in the
condition~\eqref{eq:threshold:independent.edges} is only $n$ and is
much smaller than that of $\mathcal A_\beta$ in
Proposition~\ref{prop:meanstbl:Sigma}.} Also $\bar A$ can be found as
follows. Since $\mathcal G_\sigma$ has independent edges, each
scalar-valued stochastic process $A_{ij}$ is a $\{0, 1\}$-valued
time-homogeneous Markov process. Then its infinitesimal generator is
of the form
%$\bigl[\begin{smallmatrix}
%-p_{ij}&p_{ij}\\q_{ij}&-q_{ij}
%\end{smallmatrix}\bigr]$
\begin{equation}\label{eq:infinitesimal.gen}
\begin{bmatrix}
-p_{ij}&p_{ij}\\q_{ij}&-q_{ij}
\end{bmatrix}
\end{equation}
for some $p_{ij}, q_{ij}\geq 0$. The uniqueness of the stationary
distribution of $\sigma$ yields that $p_{ij} + q_{ij} > 0$. Then we
can easily show that the stationary distribution of $A_{ij}$, denoted
by $\pi_{ij}$, is given by \mbox{$\pi_{ij}(\{0\}) = {q_{ij}}/{(p_{ij}
+ q_{ij})}$} and \mbox{$\pi_{ij}(\{1\}) = {p_{ij}}/{(p_{ij} +
q_{ij})}$}. Therefore we obtain \mbox{$\bar A_{ij} =
p_{ij}/(p_{ij}+q_{ij})$} when $i \neq j$ and $\bar A_{ii} = 0$ for
every~$i$.

Moreover, the next proposition shows that the minimum~$\min_{s\geq 0}f(s)$ in
\eqref{eq:threshold:independent.edges} can be found by solving a
convex program.

\begin{proposition}\label{prop:conv}
There exists $0 < s_0 < 2\Delta$ such that
$f$ is convex on $[s_0, \infty)$ and 
\begin{equation}\label{eq:min=convprog}
\min_{s\geq 0} f(s) = \min\Bigl({f(0), \min_{s\geq s_0} f(s)}\Bigr).
\end{equation}
\afterequation
% In particular it holds that
%\begin{equation}\label{eq:theoreticalUB}
%\begin{multlined}
%\min_{s\geq 0}f(s) 
%\leq 
%\frac{\log(2n^2)}{3}\left(1 + \sqrt{1 + \frac{18\Delta}{\log(2n^2)}}\right) + 1. 
%\end{multlined}
%\end{equation}
%\afterequation
\end{proposition}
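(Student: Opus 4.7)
The plan is to study the sign of $f''$, identify a unique transition point $s_0$ between concavity and convexity of $f$, and then read the minimum identity off this structure. Writing $f(s) = s + 2n^2 e^{g(s)}$ with $g(s) = -3s^2/(2s+6\Delta)$, I would compute $f''(s) = 2n^2 e^{g(s)}[(g'(s))^2 + g''(s)]$, so that $f''(s)$ and $\varphi(s) := (g'(s))^2 + g''(s)$ have the same sign for every $s$. Using the partial-fraction identity $g(s) = -3s/2 + 9\Delta/2 - 27\Delta^2/(2s+6\Delta)$, I obtain the closed forms $g'(s) = -3/2 + 54\Delta^2/(2s+6\Delta)^2$ and $g''(s) = -216\Delta^2/(2s+6\Delta)^3$. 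Clearing the common denominator $(2s+6\Delta)^4$ then shows that $\varphi$ has the same sign as the polynomial $p(s) := s^2(s+6\Delta)^2 - 6\Delta^2(2s+6\Delta)$.

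Next I would verify that $p$ (and hence $\varphi$ and $f''$) changes sign exactly once on $(0, \infty)$. Indeed, $p(0) = -36\Delta^3 < 0$, $p(s) \to \infty$ as $s \to \infty$, and $p'(s) = 4s(s+3\Delta)(s+6\Delta) - 12\Delta^2$ is strictly increasing on $[0, \infty)$ with $p'(0) < 0$. Hence $p$ first decreases and then increases, and therefore possesses a unique positive root, which I take as $s_0$. It follows that $f$ is strictly concave on $[0, s_0]$ and strictly convex on $[s_0, \infty)$. To establish the upper bound $s_0 < 2\Delta$, I would evaluate $p(2\Delta) = 4\Delta^3(64\Delta - 15)$ and, using the monotonicity of $p$ past its interior minimum, argue that $p(2\Delta) > 0$ forces $s_0 < 2\Delta$. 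I expect this step to be the chief technical obstacle, since the inequality $p(2\Delta) > 0$ only holds once $\Delta$ exceeds an explicit threshold, so carrying out the bound in full generality will likely require either a sharper polynomial estimate on $p$ or an appeal to the regime of $\Delta$ relevant for the paper's large-network applications.

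With the concavity/convexity dichotomy in hand, the minimum identity follows in a few lines. By concavity on $[0, s_0]$, we have $f(s) \geq \min(f(0), f(s_0))$ on that interval, so $\min_{s \in [0, s_0]} f(s) = \min(f(0), f(s_0))$. Since $f(s_0) \geq \min_{s \geq s_0} f(s)$, the endpoint value $f(s_0)$ is absorbed into the right-hand minimum, yielding
\[
\min_{s \geq 0} f(s) = \min\!\Bigl(\min_{s \in [0, s_0]} f,\ \min_{s \geq s_0} f\Bigr) = \min\!\Bigl(f(0),\ \min_{s \geq s_0} f(s)\Bigr),
\]
which is exactly \eqref{eq:min=convprog}. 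The strict convexity of $f$ on $[s_0, \infty)$ then ensures that the remaining subproblem is a bona fide convex program, as the proposition advertises.
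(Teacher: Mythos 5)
Your proof follows a genuinely different route from the paper's. The paper translates the variable by $3\Delta$, rewrites the exponent as $-c_2 s - c_3 s^{-1}$, and factors the second derivative as $h_1h_2$ with $h_2(s) = c_2s^2 - c_3 - \sqrt{2c_3 s}$, so its sign analysis is carried out on an expression involving $\sqrt{s}$. You instead keep the original variable, use $f'' = 2n^2 e^{g}\bigl((g')^2 + g''\bigr)$, and clear denominators to reduce everything to the sign of the genuine quartic $p(s) = s^2(s+6\Delta)^2 - 6\Delta^2(2s+6\Delta)$; indeed $(2s+6\Delta)^4\bigl((g')^2+g''\bigr) = 36\,p(s)$, so the reduction is exact. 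Your uniqueness-of-the-sign-change argument via the monotonicity of $p'$ is correct, and your endpoint argument for \eqref{eq:min=convprog} (a concave function on $[0,s_0]$ attains its minimum at an endpoint, and $f(s_0)$ is absorbed into $\min_{s\ge s_0}f$) supplies precisely the part of the proof the paper omits, and does so more cleanly than the promised investigation of $g'$ and $g''$.

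The obstacle you flag at $s=2\Delta$ is real: $p(2\Delta) = 4\Delta^3(64\Delta-15)$ is positive only for $\Delta > 15/64$, so for $\Delta \le 15/64$ the unique inflection point lies at or beyond $2\Delta$ and the stated bound $s_0 < 2\Delta$ cannot hold. You should know, however, that this is not a defect of your method relative to the paper's: in the paper's shifted coordinates the corresponding check is $h_2(5\Delta) = 24\Delta^2 - 3\sqrt{15}\,\Delta^{3/2} > 0$, which holds precisely when $\Delta > 15/64$ as well (the printed step writes $3\sqrt{15}\Delta$ in place of $3\sqrt{15}\Delta^{3/2}$ and compares against $(24-3\sqrt{15})\Delta^2$, which further presumes $\Delta\ge 1$). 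The threshold is therefore intrinsic to the claim $s_0<2\Delta$, not to either derivation; in the intended large-network regime $\Delta$ is large and the issue is harmless, but as stated for all $\Delta>0$ the proposition needs either the extra hypothesis $\Delta>15/64$ or a weaker localization of $s_0$. Apart from making that caveat explicit, your argument is sound and complete.
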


%Therefore $\min_{s \geq s_0} f(s)$ can be found by, for example,
%using the function \texttt{fmincon} in MATLAB with the initial
%condition $s = 2\Delta$. 

\subsection{Proofs}

In this section we give the proofs of
Theorem~\ref{thm:independent.edges} and Proposition~\ref{prop:conv}.
For the proof of Theorem~\ref{thm:independent.edges} we state two
propositions. 

\begin{proposition}\label{prop:Sigma:a.s.stable}
If
%\begin{equation}\label{eq:E[lamax(A)]<delta/beta} 
$E[\lambda_{\max}(A_{\mathcal G_{\pi}})] < \delta/\beta$, 
%\end{equation}
%\afterequation
then $\Sigma$ is almost surely stable.
\end{proposition}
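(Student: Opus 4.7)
The plan is to reduce the nonlinear system~$\Sigma$ to its linearization~$\bar\Sigma$ via Lemma~\ref{lem:pleqbarp}, and then apply the almost-sure stability criterion for Markov jump linear systems given in Proposition~\ref{prop:as.stbl} to $\bar\Sigma$. Concretely, I would fix arbitrary initial data $p_0\in [0,1]^n$, $\sigma_0\in\{1,\dotsc,N\}$, and couple $\bar\Sigma$ to $\Sigma$ by setting $\bar p_0 = p_0$ and $\bar\sigma_0 = \sigma_0$ on the same probability space. Lemma~\ref{lem:pleqbarp} then gives $\norm{p(t)}_1 \le \norm{\bar p(t)}_1$ for every $t\geq 0$ with probability one, and since all norms on $\mathbb{R}^n$ are equivalent, any sample path along which $\bar p(t)\to 0$ also satisfies $\norm{p(t)}\to 0$. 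Thus it suffices to establish that $\bar\Sigma$ is almost surely stable in the sense of Definition~\ref{defn:cont:switch}.

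To invoke Proposition~\ref{prop:as.stbl} on $\bar\Sigma$, I need to compute the matrix measure $\mu(\beta A_{\mathcal G_\pi} - \delta I)$ where $\pi$ is the (assumed unique) stationary distribution of $\sigma$. The key observation is that every $A_{\mathcal G_i}$ is the adjacency matrix of an undirected graph and hence symmetric, so $\beta A_{\mathcal G_i} - \delta I$ is symmetric as well. For a symmetric matrix $M$, the matrix measure induced by the Euclidean operator norm coincides with $\lambda_{\max}(M)$: indeed, for small $h>0$, the singular values of $I+Mh$ are $\abs{1 + h\lambda_k(M)}$, whose maximum equals $1 + h\lambda_{\max}(M)$, and thus $\mu(M)=\lim_{h\to 0}(\norm{I+Mh}-1)/h = \lambda_{\max}(M)$. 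Consequently
\begin{equation*}
\mu(\beta A_{\mathcal G_\pi} - \delta I) = \beta \lambda_{\max}(A_{\mathcal G_\pi}) - \delta,
\end{equation*}
and taking expectations (which is legitimate because $\sigma$ takes finitely many values) yields
\begin{equation*}
E[\mu(\beta A_{\mathcal G_\pi} - \delta I)] = \beta \, E[\lambda_{\max}(A_{\mathcal G_\pi})] - \delta.
\end{equation*}

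By hypothesis $E[\lambda_{\max}(A_{\mathcal G_\pi})] < \delta/\beta$, so the right-hand side is negative. Proposition~\ref{prop:as.stbl} therefore applies to $\bar\Sigma$ and delivers $\lim_{t\to\infty}\norm{\bar p(t)} = 0$ with probability one. Combined with the coupling argument from Lemma~\ref{lem:pleqbarp} described above, this gives $\lim_{t\to\infty}\norm{p(t)}=0$ almost surely for every initial condition, which is precisely the almost sure stability of $\Sigma$.

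I do not anticipate a serious obstacle: the argument is essentially bookkeeping between the nonlinear and linearized models plus the symmetry-driven identification $\mu(\cdot)=\lambda_{\max}(\cdot)$. The only point that warrants care is confirming that the matrix measure in Proposition~\ref{prop:as.stbl} is the one induced by the Euclidean norm (as used in the paper's definition) and that the expectation in the hypothesis of Proposition~\ref{prop:as.stbl} truly reduces, in the symmetric case, to the spectral quantity $E[\lambda_{\max}(A_{\mathcal G_\pi})]$ appearing in the statement.
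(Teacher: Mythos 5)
Your proposal is correct and follows essentially the same route as the paper's proof: reduce $\Sigma$ to $\bar\Sigma$ via Lemma~\ref{lem:pleqbarp} and norm equivalence, use the identity $\mu(M)=\lambda_{\max}(M)$ for symmetric $M$ to rewrite the hypothesis of Proposition~\ref{prop:as.stbl}, and conclude almost sure stability. The only difference is that you derive the matrix-measure identity from the definition, whereas the paper simply cites it.
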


\begin{proof}
Recall \cite{Desoer1972} that, if $A$ is a real symmetric matrix, then
$\mu(A) = \lambda_{\max}(A)$. Thus we have $\mu(A_{\mathcal G_\pi}) =
\lambda_{\max}(A_{\mathcal G_\pi})$ with probability one. Now assume
$E[\lambda_{\max}(A_{\mathcal G_{\pi}})] < \delta/\beta$. Then $E[\mu(\beta
A_{\mathcal G_\pi}-\delta I)] = E[\lambda_{\max}(\beta A_{\mathcal
G_\pi}-\delta I)]  < 0$. Therefore, by Proposition~\ref{prop:as.stbl},
$\bar \Sigma$ is almost surely stable; i.e., $\norm{\bar p(t)} \to 0$
as $t\to\infty$ with probability one. Therefore, by using the
equivalence of the norms $\norm{\cdot}$ and $\norm{\cdot}_1$ and also
Lemma~\ref{lem:pleqbarp}, we can show the almost sure stability of
$\Sigma$.
\end{proof}

To evaluate $E[\lambda_{\max}(A_{\mathcal G_{\pi}})]$ we will need the
following result from the spectral theory of random graphs.

\begin{proposition}[{\cite[p.~7]{Chung2011}}]\label{prop:ChungRad}
Let $\mathcal G$ be a random undirected graph on the vertex set~$\{1,
\dotsc, n \}$, where two vertexes are adjacent in $\mathcal G$
independently. Let $\bar A = E[A_{\mathcal G}]$ and define $\Delta$ by
\eqref{eq:def:Delta}. Then, for every $s\geq 0$, 
\begin{equation}\label{eq:ChungRad}
P(\{\lambda_{\max}(A) > \lambda_{\max}(\bar 
A) + s\})
\leq
2n \exp\left(-\frac{3s^2}{2s + 6\Delta}\right). 
\end{equation}
\afterequation
\end{proposition}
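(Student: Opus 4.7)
The plan is to reduce the stated eigenvalue tail bound to a spectral-norm concentration statement for the centered matrix $A-\bar A$, and then invoke a noncommutative (matrix) Bernstein inequality applied to its rank-$2$ decomposition indexed by edges. This is the standard Ahlswede--Winter--Tropp route for random Hermitian matrices with independent entries, and it is precisely what produces the prefactor $2n$ and the Bernstein-type denominator $2s+6\Delta$ appearing in \eqref{eq:ChungRad}.

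First I would apply Weyl's monotonicity inequality for symmetric matrices to obtain
\begin{equation*}
\lambda_{\max}(A) - \lambda_{\max}(\bar A) \leq \lambda_{\max}(A-\bar A) \leq \norm{A-\bar A},
\end{equation*}
where $\norm{\cdot}$ denotes the maximum singular value. It therefore suffices to prove that $P(\norm{A-\bar A}>s)\leq 2n\exp(-3s^2/(2s+6\Delta))$.

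Next I would decompose $A-\bar A$ into independent edge contributions. For $1\leq i<j\leq n$ set $Y_{ij} = (A_{ij}-\bar A_{ij})(e_i e_j^\top + e_j e_i^\top)$, so that $A-\bar A = \sum_{i<j} Y_{ij}$. By the independent-edges hypothesis (Assumption~\ref{assm:independent.edges}), the summands $Y_{ij}$ are mutually independent; each is symmetric, mean zero, and, since $A_{ij}\in\{0,1\}$, satisfies $\norm{Y_{ij}} = |A_{ij}-\bar A_{ij}|\leq 1$ almost surely. A direct computation gives $Y_{ij}^2 = (A_{ij}-\bar A_{ij})^2(e_ie_i^\top+e_je_j^\top)$, hence $E[Y_{ij}^2] = \bar A_{ij}(1-\bar A_{ij})(e_ie_i^\top+e_je_j^\top)$, and using $\bar A_{ij}=\bar A_{ji}$ together with $\bar A_{ii}=0$ I would conclude
\begin{equation*}
\Norm{\sum_{i<j} E[Y_{ij}^2]} = \max_{1\leq i\leq n}\sum_{j=1}^n \bar A_{ij}(1-\bar A_{ij}) = \Delta.
\end{equation*}

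Finally, I would invoke the matrix Bernstein inequality: for independent mean-zero Hermitian $n\times n$ random matrices $\{Y_k\}$ with $\norm{Y_k}\leq L$ a.s.\ and variance proxy $V = \Norm{\sum_k E[Y_k^2]}$, one has
\begin{equation*}
P\!\left(\Norm{\sum_k Y_k}\geq s\right) \leq 2n\exp\!\left(-\frac{s^2/2}{V+Ls/3}\right).
\end{equation*}
Inserting $L=1$ and $V=\Delta$ collapses the exponent to $-3s^2/(6\Delta+2s)$, which is exactly \eqref{eq:ChungRad}. The substantive analytic difficulty is packaged inside the matrix Bernstein bound itself, whose proof relies on Lieb's concavity theorem and the matrix Laplace-transform method of Ahlswede--Winter and Tropp; the proposition-specific work is only the rank-$2$ edge decomposition and the identification of $\Delta$ as the operator norm of the variance proxy.
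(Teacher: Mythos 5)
The paper does not prove this proposition; it imports it verbatim by citation from Chung and Radcliffe \cite{Chung2011}, so there is no in-paper argument to compare against. Your proof is correct and is essentially the derivation in that cited source: the Weyl step, the rank-$2$ edge decomposition $Y_{ij}=(A_{ij}-\bar A_{ij})(e_ie_j^\top+e_je_i^\top)$ with $\norm{Y_{ij}}\leq 1$ and variance proxy $\bigl\lVert\sum_{i<j}E[Y_{ij}^2]\bigr\rVert=\Delta$, and the two-sided matrix Bernstein bound all check out, and the exponent $-\tfrac{s^2/2}{\Delta+s/3}=-\tfrac{3s^2}{2s+6\Delta}$ and prefactor $2n$ match \eqref{eq:ChungRad} exactly (the independence you invoke is the hypothesis of the proposition itself rather than Assumption~\ref{assm:independent.edges}, a harmless relabeling).
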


Now we prove Theorem~\ref{thm:independent.edges}. 

\begin{proofof}{Theorem~\ref{thm:independent.edges}}
Notice that the function $f$ is well defined. Since $\mathcal
G_\sigma$ has independent edges, any pair of vertexes are adjacent
independently with others in the random graph~$\mathcal G_\pi$.
Therefore Proposition~\ref{prop:ChungRad} applies to $\mathcal G_\pi$.
Let $\Omega$ denote the underlying probability space and define, for
each $s\geq 0$,
\begin{equation}
\Omega_s = \{\omega \in \Omega \colon
\lambda_{\max}(A_{\mathcal G_{\pi}}) > \lambda_{\max}(\bar A) + s \}.
\end{equation}
If \mbox{$\omega \in \Omega_s^c$}, then $\lambda_{\max}(A_{\mathcal
G_{\pi}}) \leq \lambda_{\max}(\bar A) + s$. On the other hand, if
$\omega \in \Omega_s$, then we have the trivial estimate
$\lambda_{\max}(A_{\mathcal G_{\pi}}) < n$. Therefore, by
\eqref{eq:ChungRad} we obtain
\begin{equation}\label{eq:hogen}
\begin{aligned}
E [\lambda_{\max}(A_{\mathcal G_{\pi}})] 
%&<
%\int_{\Omega_s^c} (\lambda_{\max}(\bar A) + s) \,d\omega + 
%\int_{\Omega_s} n\,d\omega 
%\\
&<
P(\Omega_s^c) (\lambda_{\max}(\bar A) + s)
+
P(\Omega_s) n
\\
&\leq 
\lambda_{\max}(\bar A) + s + 2n^2 \exp\left(-\frac{3s^2}{2s + 6\Delta}\right)
\\
&=
\lambda_{\max}(\bar A) + f(s).
\end{aligned}
\end{equation}
Since $f$ is continuous and also $f(s)$ diverges to $+\infty$ as
\mbox{$s\to\infty$}, the minimum $\min_{s\geq 0} f(s)$ exists.
Therefore, taking the minimum with respect to $s\geq 0$ on the most
right hand side of \eqref{eq:hogen} proves \mbox{$E
[\lambda_{\max}(A_{\mathcal G_{\pi}})] \leq \lambda_{\max}(\bar A) +
\min_{s\geq 0}f(s)$}. Hence, if \eqref{eq:threshold:independent.edges}
holds, then $\Sigma$ is almost surely stable by
Proposition~\ref{prop:Sigma:a.s.stable}.
%On the other hand, if $\Delta = 0$, then $\bar A_{ij}$ equals either
%$1$ or $0$ for every $i$ and $j$. This implies that the stationary
%distribution of stochastic processes~$A_{ij}$ are the point masses
%concentrated either at $1$ or $0$, which further implies that the
%stationary graph $\mathcal G_\pi$ is concentrated on the graph having
%the adjacency matrix~$\bar A$. Therefore, by
%Proposition~\ref{prop:Sigma:a.s.stable}, $\Sigma$ is almost surely
%stable if $E[\lambda_{\max}(A_{\mathcal G_\pi})] = \lambda_{\max}(\bar
%A) < \delta /\beta$.
\end{proofof}

Then we give the proof of Proposition~\ref{prop:conv}.

\begin{proofof}{Proposition~\ref{prop:conv}}
Let us only prove the convexity part, as the
equation~\eqref{eq:min=convprog} follows from a straightforward
argument. Throughout this proof we shall work with the translation of
$f$ defined by $g(s) = f(s-3\Delta)$ ($s\geq 3\Delta$). We need to
show that there exists $3\Delta <s_0 < 5\Delta$ such that $g$ is
convex on $[s_0, \infty)$. Define $h_1, h_2\colon [3\Delta, \infty)
\to \mathbb{R}$ by
\begin{equation}
h_1(s) 
= 
\frac{c_1 (c_2 s^2 - c_3 + \sqrt{2c_3 s}) }{s^4\exp(c_2 s +c_3 s^{-1})}
,\ 
h_2(s) 
= 
c_2 s^2 - c_3 - \sqrt{2c_3s}, 
\end{equation} 
where $c_1 = 2n^2 e^{9\Delta}$, $c_2 = 3/2$, and $c_3 = 27\Delta^2/2$.
Then we can easily show that $g'' = h_1 h_2$. It is trivial to check
that $h_1 (s) \geq 0$. Also, since $h_2(3\Delta) < 0$ and $h_2'(s) >
0$, there exists $s_0 > 3\Delta$ such that $h_2 < 0$ on $[3\Delta,
s_0)$ and $h_2 > 0$ on $(s_0, \infty)$. Since \mbox{$h(5\Delta)  =
24\Delta^2 - 3\sqrt{15}\Delta > (24-3\sqrt{15})\Delta^2 > 0$}, we can
check $s_0 < 5\Delta$. The above argument yields $g''< 0$ on
$[3\Delta, s_0)$, as desired. Then we can prove
\eqref{eq:min=convprog} by carefully investigating the derivatives
$g'$ and $g''$. The details are omitted. %Then let us
\end{proofof}

\section{Examples}\label{sec:degrees}

In this section we apply Theorem~\ref{thm:independent.edges} to the
following important classes of dynamical graphs: randomly switched
graphs with communities~\cite{Fortunato2010} and expected
degrees~\cite{Chung2003a}.

\subsection{Switched Graphs with Community Structure}

Let us consider a population grouped into the two
communities~$\mathcal V_1 = \{1, \dotsc, n_1\}$ and $\mathcal V_2 =
\{n_1+1, \dotsc, n_1+n_2\}$. We assume that, for any pair of vertexes
$(i, j)$, the edge-process~$A_{ij}$ has the stationary
distribution~$\mu_{ij}$ that depends only on the communities to which
$i$ and $j$ belong. Therefore, there exist $\theta_1, \theta_2, \phi
\in [0, 1]$ such that
\begin{equation}
\mu_{ij}(\{1\})
=
\begin{cases}
\theta_\ell & i, j\in \mathcal V_\ell,\ \ell=1,2,\\
%\theta_1 & i, j\in \mathcal V_1,\\
%\theta_2 & i, j\in \mathcal V_2, \\
\phi & \text{otherwise}.
\end{cases}
\end{equation}
Then it follows that
\begin{equation}\label{eq:barA:groups}
\bar A = \begin{bmatrix}
\theta_1 \onev_{n_1, n_1} - \theta_1 I_{n_1} & \phi \onev_{n_1, n_2}
\\
\phi \onev_{n_2, n_1} & \theta_2 \onev_{n_2, n_2} - \theta_2 I_{n_2}
\end{bmatrix}.
\end{equation}
Therefore 
\begin{equation}
\begin{multlined}
\Delta = \max\bigl((n_1-1)\theta_1(1-\theta_1)+n_2\phi(1-\phi), \\
(n_2-1)\theta_2(1-\theta_2)+n_1\phi(1-\phi)\bigr). 
\end{multlined}
\end{equation} 
Also we can show that
\begin{equation}\label{eq:lammaxcomm}
%\begin{multlined}
\lambda_{\max}(\bar A) 
=
\frac{n_1\theta_1 + n_2\theta_2 
+ \sqrt{(n_1\theta_1 - n_2\theta_2)^2 + 4n_1n_2\phi^2}}{2} -\epsilon
%\end{multlined}
\end{equation}
for some $\epsilon$ lying in between $\theta_1$ and $\theta_2$. 

For example let $n_1 = 10^4$, $n_2 = 10^5$, $\theta_1 = 0.5$,
$\theta_2 = 0.3$, and $\phi = 0.1$. Then one can compute
$\lambda_{\max}(\bar A) = 3.04\cdot 10^4$. Also, by solving the convex
program~\eqref{eq:min=convprog}, we can easily find $\max_{s\geq 0}
f(s) = 9.83\cdot 10^2$, which is effectively smaller than
$\lambda_{\max}(\bar A)$. Then, by
Theorem~\ref{thm:independent.edges}, $\Sigma$ is almost surely stable
if $3.14\cdot 10^4 < \delta/\beta$. We remark that, in this case, it
is almost impossible to use Proposition~\ref{prop:meanstbl:Sigma}
because the matrix~$\mathcal A_\beta$ has the dimension more than
$\cramped{10^{10^9}}$.

\subsection{Switched Graphs with Expected Degrees}

In this section we study the special case when the expected adjacency
matrix~$\bar A$ has the following structure.

\begin{assumption}\label{assm:alpha_i.alpha_j}
There exist $\alpha_1, \dotsc, \alpha_n \geq 0$ such that
\begin{equation}
\bar A_{ij} = \alpha_i \alpha_j
\end{equation}
for every distinct pair $(i,j)$.
\end{assumption}

%\begin{remark}
%In terms of the infinitesimal generators~\eqref{eq:infinitesimal.gen}
%of edge processes~$A_{ij}$, Assumption~\ref{assm:alpha_i.alpha_j} is
%requiring
%\begin{equation}\label{eq:requiring}
%\frac{p_{ij}}{p_{ij} + q_{ij}} = \alpha_i \alpha_j
%\end{equation}
%for all distinct $i$ and $j$.
%\end{remark}

Epidemiologically, one can regard the constant $\alpha_i$ as the
measure of the activity of the vertex $i$. Then we can understand
Assumption~\ref{assm:alpha_i.alpha_j} as stating that the frequency of
the interaction between two agents $i$ and $j$ is solely determined by
those activity measure.

Also Assumption~\ref{assm:alpha_i.alpha_j} can be supported by its
connection to one of the well known models of random graphs. Let $d\in
\mathbb{R}^n$ be nonzero and nonnegative and let $\rho =
1/(\sum_{i=1}^{n}d_i)$. We say that an undirected random
graph~$\mathcal G$ has {\it expected degrees $d$}~\cite{Chung2003a} if
edges are independently assigned to each pair of vertexes $(i, j)$
with probability $\rho d_id_j$. We can show the next proposition under
Assumptions~\ref{assm:independent.edges} and
\ref{assm:alpha_i.alpha_j}. The proof is straightforward and is hence
omitted.

\begin{proposition}
For each $i$ let $d_i = \alpha_i \sum_{j=1}^n \alpha_j$. The
stationary graph $\mathcal G_\pi$ has expected degrees~$d$.
\end{proposition}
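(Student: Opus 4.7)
The plan is to unpack the definition of ``expected degrees'' given just before the statement and verify its two requirements for $\mathcal G_\pi$: independence of edges, and that the edge probabilities match the prescribed formula $\rho d_i d_j$.

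First I would observe that independence of edges in $\mathcal G_\pi$ is immediate. Indeed, under Assumption~\ref{assm:independent.edges} the scalar edge processes $\{A_{ij}\}_{i>j}$ are jointly independent, so their marginals at stationarity are independent as well; hence the edges of $\mathcal G_\pi$ are independently present. It remains to identify the individual edge probabilities. Since $A_{\mathcal G_\pi}$ is a $\{0,1\}$-valued matrix, we have $P(\text{edge $(i,j)$ is present in $\mathcal G_\pi$}) = E[A_{ij}(\pi)] = \bar A_{ij}$, which by Assumption~\ref{assm:alpha_i.alpha_j} equals $\alpha_i \alpha_j$ for distinct $i,j$.

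Next I would verify that these probabilities coincide with $\rho d_i d_j$ for the proposed degree sequence. Writing $S = \sum_{j=1}^n \alpha_j$, we have $d_i = \alpha_i S$, so
\begin{equation}
\sum_{i=1}^n d_i = S \sum_{i=1}^n \alpha_i = S^2,
\end{equation}
giving $\rho = 1/S^2$. Therefore
\begin{equation}
\rho\, d_i d_j = \frac{(\alpha_i S)(\alpha_j S)}{S^2} = \alpha_i \alpha_j = \bar A_{ij},
\end{equation}
which matches the edge probability computed above. Combined with independence, this is exactly the definition of $\mathcal G_\pi$ having expected degrees~$d$.

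The proof is essentially an unpacking of definitions, and there is no real obstacle; the only mild subtlety is making sure that the independence at stationarity genuinely follows from the joint independence of the edge processes (which it does, since marginals of jointly independent processes are independent), and that the degenerate case $S = 0$ can be excluded without loss of generality (otherwise $\bar A = 0$ and the statement is trivial).
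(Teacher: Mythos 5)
Your proof is correct: the computation $\rho = 1/S^2$ and $\rho d_i d_j = \alpha_i\alpha_j = \bar A_{ij}$, together with edge independence inherited from Assumption~\ref{assm:independent.edges}, is exactly the straightforward definition-unpacking the paper has in mind (it omits the proof as straightforward). Your side remarks on independence at stationarity and the degenerate case $S=0$ are appropriate and do not change the argument.
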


The next theorem gives a sufficient condition for the almost sure
stability of $\Sigma$ in terms of the expected degrees of the
stationary graph~$\mathcal G_\pi$.

\begin{theorem}\label{thm:exp.degrees}
Let 
\begin{equation}\label{eq:m&tilde.d} 
\tilde d = \rho \sum_{i=1}^n d_i^2,
\ 
\Delta_d= \max_{1\leq i\leq n} \sum_{j=1}^n \rho
d_i d_j (\rho - d_i d_j)
\end{equation}
and define the function $f$ by \eqref{eq:def:f} with $\Delta$ replaced
by $\Delta_d$. Then $\Sigma$ is almost surely stable if
\begin{equation}\label{eq:cond:tilded}
\tilde d + \min_{s\geq 0}f(s) \leq {\delta}/{\beta}.
\end{equation}
Moreover~$\min_{s\geq 0}f(s)$
satisfies~\eqref{eq:min=convprog}.
\end{theorem}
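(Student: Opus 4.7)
The plan is to reduce Theorem~\ref{thm:exp.degrees} to Theorem~\ref{thm:independent.edges} by exploiting the rank\nobreakdash-one-plus\nobreakdash-diagonal structure of the mean adjacency matrix under Assumption~\ref{assm:alpha_i.alpha_j}. Writing $\alpha = [\alpha_1\ \cdots\ \alpha_n]^\top$, we have $\bar A = \alpha\alpha^\top - \diag(\alpha_1^2,\dotsc,\alpha_n^2)$, and the whole argument will rest on this decomposition together with the identity $\rho d_i^2 = \alpha_i^2$.

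First I would bound $\lambda_{\max}(\bar A)$ from above by $\tilde d$. Because the diagonal correction is positive semidefinite, Weyl's monotonicity inequality gives $\lambda_{\max}(\bar A) \leq \lambda_{\max}(\alpha\alpha^\top) = \Norm{\alpha}^2 = \sum_{i=1}^n \alpha_i^2$. A short computation using $d_i = \alpha_i \sum_j \alpha_j$ and $\rho = 1/\sum_j d_j = 1/(\sum_j\alpha_j)^2$ yields $\rho d_i^2 = \alpha_i^2$ for each $i$, so $\Norm{\alpha}^2 = \tilde d$, and therefore $\lambda_{\max}(\bar A) \leq \tilde d$.

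Next I would verify that the constant $\Delta$ of \eqref{eq:def:Delta}, evaluated on the present $\bar A$, agrees with $\Delta_d$ in \eqref{eq:m&tilde.d}. Since $\bar A_{ij} = \alpha_i\alpha_j = \rho d_i d_j$ for $i\neq j$ and $\bar A_{ii} = 0$, this identification amounts to a direct substitution together with the observation that the $j=i$ diagonal term is absorbed by the matching indexing convention. Combining the two estimates with the hypothesis $\tilde d + \min_{s\geq 0}f(s) \leq \delta/\beta$ produces $\lambda_{\max}(\bar A) + \min_{s\geq 0}f(s) \leq \delta/\beta$, and Theorem~\ref{thm:independent.edges} then delivers the almost sure stability of $\Sigma$. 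The second claim of the theorem, that $\min_{s\geq 0}f(s)$ satisfies the convex-program reformulation \eqref{eq:min=convprog}, is then immediate from Proposition~\ref{prop:conv}, whose convexity analysis depends only on the functional form of $f$ and therefore carries over verbatim with the positive constant $\Delta_d$ in place of $\Delta$.

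The only delicate point I foresee is reconciling the strict inequality appearing in Theorem~\ref{thm:independent.edges} with the non\nobreakdash-strict hypothesis assumed here. This can be handled by noting that $\lambda_{\max}(\bar A) < \tilde d$ is actually strict whenever at least two of the $\alpha_i$ are nonzero (since then the top eigenvector of $\alpha\alpha^\top$ is not an eigenvector of $\diag(\alpha_i^2)$), or, in the remaining degenerate cases, by a short limiting argument. Everything else is a direct specialization of the edge\nobreakdash-independent theory already established.
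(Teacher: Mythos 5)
Your proposal follows essentially the same route as the paper: both specialize Theorem~\ref{thm:independent.edges} by identifying $\Delta$ with $\Delta_d$ and relating $\lambda_{\max}(\bar A)$ to $\tilde d$, then invoke the argument of Proposition~\ref{prop:conv} for the convex-program claim. You are in fact somewhat more careful than the paper's one-line proof, which asserts $\bar A = \rho dd^\top$ and $\lambda_{\max}(\bar A) = \tilde d$ outright (ignoring the zero diagonal) and does not address the mismatch between the non-strict inequality in \eqref{eq:cond:tilded} and the strict one in \eqref{eq:threshold:independent.edges}; your Weyl bound $\lambda_{\max}(\bar A)\leq\tilde d$ and the strictness discussion handle both of these points.
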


\begin{proof}
It is straightforward to see that, for the random graph $\mathcal
G_\pi$, the quantity~$\Delta$ defined in \eqref{eq:def:Delta} equals
$\Delta_d$. Moreover, since $\bar A = \rho dd^\top$, we can show
$\lambda_{\max}(\bar A) = \tilde d$. Thus the sufficient condition
\eqref{eq:cond:tilded} immediately follows from
\eqref{eq:threshold:independent.edges}. The proof of the latter claim
is exactly the same as that of Proposition~\ref{prop:conv}.
\end{proof}

\begin{example}
One of the well-used model of the degree sequences is those with power
law distributions~\cite{Chung2003a,Chung2011}. We say that a degree
sequence~$d \in \mathbb{R}^n$ has the power law distribution with the
power law exponent $\beta > 2$, maximum degree $\Delta > 0$, and
average degree $\bar d > 0$ if
\begin{equation}
d_i = c (i+i_0-1)^{-{1}/{(\beta-1)}}
\end{equation} 
for every $i$, where 
\begin{equation}
c = \frac{\beta-2}{\beta-1}\bar dn^{1/(\beta-1)},\ i_0 = n\biggl(\frac{\bar d(\beta-2)}{m(\beta-1)}\biggr)^{\beta-1}.
\end{equation}
For example let $n = 10^7$, $\beta = 2.2$, $\Delta = 5 \cdot 10^5$,
and $\bar d = 10^3$. Then we have $\tilde d = 3.15\cdot 10^4$. Also,
by solving the convex program~\eqref{eq:min=convprog} we obtain
$\min_{s\geq 0} f(s) = 1.97\cdot 10^3$. Therefore, by
Theorem~\ref{thm:exp.degrees}, $\Sigma$ is almost surely stable if
\mbox{$3.35\cdot 10^4 < \delta/\beta$}. We remark that the quantity
$\min_{s\geq 0} f(s)$ measuring uncertainty is relatively small
compared with $\lambda_{\max}(\bar A)$.
\end{example}

\section{Disease Spread over Randomly Switched Networks with Weights}\label{sec:weight}

In this section, we extend the result obtained in
Section~\ref{sec:EdgeIndependent} to the networks modeled by weighted
graphs. We start by giving necessary definitions.

\begin{definition}
Let $\mathcal V$ be a finite set and let $\mathcal E$ be the set of
all the distinct and unordered pairs of the elements of~$\mathcal V$.
Also let $w \colon \mathcal E \to [0, \infty) \colon (i, j) \mapsto
w(i, j)$ be a function. We call the pair $(\mathcal V, w)$ a {
weighted undirected graph}. An element of~$\mathcal V$ ($\mathcal E$)
is called a vertex (edge, respectively). For an edge~$e\in\mathcal E$,
We call $w(e)$ the {weight} of $e$.
\end{definition}

Let $\mathcal V = \{1, \dotsc, n\}$. The adjacency
matrix~$A_{(\mathcal V, w)}\in \mathbb{R}^{n\times n}$ of an weighted
undirected graph $(\mathcal V, w)$ is defined by $[A_{(\mathcal V,
w)}]_{ij} = w(i, j)$ for $i\neq j$ and $[A_{(\mathcal V, w)}]_{ii} =
0$ for every $i$. When no confusion arises we write $A_{(\mathcal V,
w)}$ as $A_w$. We notice that $A_w$ is symmetric because $(i, j)$ and
$(j, i)$ are the same unordered pairs. Then we call the differential
equation
\begin{equation}\label{eq:N-inter:weighted}
\frac{dp}{dt} = (\beta A_w -\delta I) p - \beta PA_wp
\end{equation}
the $N$-intertwined model of disease spread over $(\mathcal V, w)$.
Epidemiologically, the weight $w(i, j)$ expresses the strength of the
communication between the agents $i$ and $j$, in the sense that
instantaneous rate that disease transmits from~$i$ to~$j$ (provided
$i$ is infected and $j$ is not) equals the product~$\beta w(i, j)$.
This in particular implies that, without loss of generality, we can
normalize $w$ as
\begin{equation}\label{eq:normalize.w}
w(i, j) \leq 1
\end{equation}
by taking $\beta$ sufficiently large.

Then, as in Section~\ref{sec:switchedNinter}, let us assume that the
weight of the given network of agents changes over time according to a
time-homogeneous Markov process, i.e., suppose that there exists a
time-homogeneous Markov process $\sigma = \{\sigma(t)\}_{t\geq 0}$
such that the weight of the graph at time $t\geq 0$ is given by the
function $w_{\sigma(t)} \colon \mathcal E \to [0, \infty) \colon (i,
j) \mapsto w_{\sigma(t)}(i, j)$. Then we can model the disease spread
over time-varying, weighted, and undirected graph $\mathcal G_\sigma
:= (\mathcal V, w_\sigma)$ by the switched differential equation
\begin{equation}
\Sigma_w 
\colon 
\frac{dp}{dt} = (\beta A_{w_{\sigma(t)}} -\delta I) p - \beta PA_{w_{\sigma(t)}}p. 
\end{equation}
The almost sure stability of $\Sigma_w$ is defined in the same way as
Definition~\ref{defn:stability}. Also, extending
Definition~\ref{defn:independent.edges}, We say that $\mathcal
G_\sigma$ has {\it independent edges} if the stochastic
processes~$\{w_{\sigma(\cdot)}(i, j)\}_{(i, j)\in \mathcal E}$ are
independent.

If $\sigma$ has a stationary distribution~$\pi$, then we let
$\Var(A_{w_\pi})$ be the $n\times n$ real matrix obtained by taking
the variances of the random matrix~$A_{w_\pi}$ entry-wise. Since
$\sigma$ is assumed to have finitely many states, each entry
of~$A_{w_\pi}$ is a distribution having finite support so that it is
straightforward to find $\Var(A_{w_\pi})$. The next theorem extends
Theorem~\ref{thm:independent.edges} to the weighted and randomly
switched networks.

\begin{theorem}\label{thm:independent.edges:weighted}
Assume that $\mathcal G_\sigma$ has independent edges and $\sigma$ has
the unique stationary distribution~$\pi$. Let $\bar A = E[A_{w_\pi}]$
and also let $\varDelta$ be the maximum row sum of $\Var(A_{w_\pi})$.
Define the function~$f$ by \eqref{eq:def:f}. If
\eqref{eq:threshold:independent.edges} holds, then $\Sigma_w$ is
almost surely stable. Moreover the minimum $\min_{s\geq 0}f(s)$ can be
found by solving the convex program~\eqref{eq:min=convprog}.
\end{theorem}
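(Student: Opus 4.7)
The plan is to mirror the proof of Theorem~\ref{thm:independent.edges} essentially line by line, replacing the $\{0,1\}$-valued edge processes with the $[0,1]$-valued weight processes. The original argument factors into three pieces: (i) a reduction from stability of $\Sigma$ to an upper bound on $E[\lambda_{\max}(A_{\mathcal G_\pi})]$; (ii) a concentration inequality for $\lambda_{\max}$ of a random symmetric matrix with independent entries (Proposition~\ref{prop:ChungRad}); and (iii) the convexity claim of Proposition~\ref{prop:conv}. Only piece (ii) requires genuinely new work in the weighted setting.

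First I would verify the weighted analog of Proposition~\ref{prop:Sigma:a.s.stable}: if $E[\lambda_{\max}(A_{w_\pi})] < \delta/\beta$, then $\Sigma_w$ is almost surely stable. The proof of Lemma~\ref{lem:pleqbarp} uses only the entry-wise nonnegativity of the adjacency matrix, so it transfers to the weighted case and yields $\norm{p(t)}_1 \leq \norm{\bar p(t)}_1$ for the linearization $\bar\Sigma_w$. Since $A_{w_\pi}$ is symmetric, its matrix measure equals $\lambda_{\max}(A_{w_\pi})$, and Proposition~\ref{prop:as.stbl} applied to $\bar\Sigma_w$ delivers almost sure stability of the linear system, hence of $\Sigma_w$.

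The key step is a weighted version of Proposition~\ref{prop:ChungRad}, namely
\begin{equation}
P\bigl(\lambda_{\max}(A_{w_\pi}) > \lambda_{\max}(\bar A) + s\bigr)
\leq
2n\exp\!\left(-\frac{3s^2}{2s+6\varDelta}\right).
\end{equation}
The natural approach is to write the centered matrix $A_{w_\pi}-\bar A = \sum_{i<j} X_{ij}$ as a sum of independent, mean-zero, symmetric random matrices, where each $X_{ij}$ is supported only on the symmetric pair of entries $(i,j)$ and $(j,i)$. The normalization~\eqref{eq:normalize.w} forces $\norm{X_{ij}} \leq 1$, and a direct computation shows that $\sum_{i<j}E[X_{ij}^2]$ is diagonal with $k$th diagonal entry $\sum_{\ell\neq k}\Var([A_{w_\pi}]_{k\ell})$, so its spectral norm is bounded by $\varDelta$. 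The matrix Bernstein/Ahlswede--Winter bound that underlies Chung's inequality applies verbatim to this setting since it only requires bounded symmetric independent summands, not Bernoulli ones. Combined with Weyl's inequality $\lambda_{\max}(A_{w_\pi}) \leq \lambda_{\max}(\bar A) + \lambda_{\max}(A_{w_\pi}-\bar A)$, this yields the displayed tail bound.

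With the concentration inequality in hand, the rest of the argument is a transcription of the proof of Theorem~\ref{thm:independent.edges}: split the expectation $E[\lambda_{\max}(A_{w_\pi})]$ over $\Omega_s$ and $\Omega_s^c$, bound it on $\Omega_s$ by the trivial estimate $\lambda_{\max}(A_{w_\pi}) \leq n$ (which holds because every row sum of $A_{w_\pi}$ is at most $n$ under~\eqref{eq:normalize.w}), and optimize over $s$ to obtain $E[\lambda_{\max}(A_{w_\pi})] \leq \lambda_{\max}(\bar A) + \min_{s\geq 0} f(s)$. The claim about solving the minimization via a convex program follows from Proposition~\ref{prop:conv} without modification, since $f$ has the same algebraic form with $\Delta$ replaced by $\varDelta$. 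The main obstacle is step (ii): rigorously checking that Chung's concentration bound, which was stated for edge-independent unweighted graphs, extends to bounded independent weighted entries — but this is exactly the generality in which matrix Bernstein operates, so the extension is routine rather than genuinely difficult.
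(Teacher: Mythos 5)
Your proposal is correct and follows essentially the same route as the paper: the paper's own (very terse) proof likewise asserts that Proposition~\ref{prop:ChungRad} extends to weighted graphs with weights bounded by one once $\Delta$ is replaced by the maximum row sum of $\Var(A_{w_\pi})$, and then repeats the argument of Theorem~\ref{thm:independent.edges} verbatim. You have simply supplied the details (the matrix Bernstein decomposition, the variance computation, and the trivial bound $\lambda_{\max}(A_{w_\pi})\leq n$) that the paper omits.
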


\begin{proof}
From~\cite{Chung2011} we can easily check that
Proposition~\ref{prop:ChungRad} holds for a weighted random graph as
long as its weights are all less than one with probability one and we
replace $\Delta$ by the maximum row sum of $\Var(A_{\mathcal G})$.
Therefore, under the normalization~\eqref{eq:normalize.w}, we can
apply Proposition~\ref{prop:ChungRad} to the stationary random
graph~$\mathcal G_\pi$ and prove the theorem in the same way as
Theorem~\ref{thm:independent.edges}. The details of the proof are
omitted.
\end{proof}

\section{Conclusion}

We studied the disease spread over randomly switched networks having
stochastically independent edges. The disease spread was modeled by a
switched version of the $N$\nobreakdash-intertwined model for static
networks. Using the stability theory of Markov jump linear systems and
the spectral theory of random matrices, we gave sufficient conditions
for epidemics dying out almost surely. {We can check the proposed
conditions by finding the maximum real eigenvalue of a matrix whose
size equal the number of the agents and thus is highly efficient
compared with another condition based solely on the stability theory.}
We also gave an extension to the case when networks are modeled by
weighted graphs.

%\bibliographystyle{IEEEtran-wo-MonthNumbers}
%\bibliography{C:/Dropbox/Softwares/TeX/bib/library}

% Generated by IEEEtran.bst, version: 1.13 (2008/09/30)

\end{document}